\documentclass[12pt]{amsart}%
\usepackage{amsmath}
\usepackage{amssymb}
\usepackage{amsthm}
\usepackage{amscd}
\usepackage{comment}
\usepackage[dvips]{graphicx}
\usepackage[mathscr]{eucal}
\baselineskip=12pt

\newtheorem{Thm}{Theorem}[section]
\theoremstyle{definition}
\newtheorem{Theorem}[Thm]{Theorem}
\newtheorem{Lemma}[Thm]{Lemma}

\newtheorem{Proposition}[Thm]{Proposition}

\theoremstyle{remark}
\newtheorem{Remark}{Remark}

\font\ym=msbm10  
\newcommand{\Aut}{{\rm Aut}}

\newcommand{\R}{\text{\ym R}}

\newcommand{\C}{\text{\ym C}}

\newcommand{\sB}{\mathscr B}
\newcommand{\sC}{\mathscr C}

\newcommand{\sF}{\mathscr F}
\newcommand{\sH}{\mathscr H}

\newcommand{\sP}{\mathscr P}

\title[]
{Scaling Flow on Covariance Forms of CCR Algebras}
\author[Scaling Flow]{Shigeru Yamagami}
\address{Graduate School of Mathematics\\
  Nagoya University}

\begin{document}
\maketitle   

\begin{abstract}
  In connection with parametric rescaling of free dynamics of CCR,
  we introduce a flow on the set of covariance forms 
  and investigate its thermodynamic behavior at low temperature
  with the conclusion that every free state approaches to a selected Fock state as a limit. 
\end{abstract}

\section*{Introduction}
In canonical quantum algebras, 
there is a close relationship 
between free states and one-parameter groups of Bogoliubov automorphisms (\cite{BR2}).
Rescaling of the group parameter is therefore expected to induce a flow on free states.
We shall here present an explicit realization of this idea in terms of covariance forms (or two-point functions)
parametrizing free states.
Since covariance forms constitute a convex set, it gives a flow of geometric nature.
The fixed points are characterized as extremal points and the asymptotics of the flow is then intrepreted as
expressing thermodynamic behavior at high or low temperature.

In the case of fermion algebras, covariance forms are simply expressed by covariance operators
and, given a covariance operator $C$ in a *-Hilbert space $V^\C$, the thermodynamic Hamiltonian is
the infinitesimal generator of one-parameter unitary group ${\overline C}^{it} C^{-it}$ (\cite{Ar}),
whence the corresponding $r$-th power scaling is nothing but 
$C^r/(C^r + {\overline C}^r)$ with its asymptotic behavior read off from functional limits
\[
  \lim_{r\to 0} \frac{c^r}{c^r + (1-c)^r} =
  \begin{cases} 0 &\text{if $c = 0$,}\\
    1/2 &\text{if $0 < c < 1$,}\\
    1 &\text{if $c = 1$}
  \end{cases}
\]
and 
\[
  \lim_{r\to \infty} \frac{c^r}{c^r + (1-c)^r} =
  \begin{cases} 0 &\text{if $0 \leq c < 1/2$,}\\
    1/2 &\text{if $c = 1/2$,}\\
    1 &\text{if $1/2 < c \leq 1$.}
  \end{cases}
\]

In the case of boson algebras of infinite freedom, some subtlety arises however,
because there is no privileged topology in
an infinite-dimensional symplectic vector space $(V,\sigma)$
and we need to specify an inner product and take a completion of $(V,\sigma)$ to describe one-parameter group of
one-particle transformations (\cite{AY}).

For covariance forms, though there is no overall inner product, we can introduce a flow as a functional calculus
on positive forms so that it really gives a parametric rescaling of free dynamics under a mild condition
of non-geneneracy on covariance forms.
In the case of finite-dimensional symplectic vector spaces, the relevant physics is quantum mechanics of
finite degree of freedom, which enables us to work with density operators and the scaling flow is
explicitly related to powers of density operators.

In this situation, we show that high temperature limit is divergent,  
whereas zero temperature limit freezes
each free state to a selected Fock state. 

The author is indebted to Taku Matsui for condensation related meanings of zero temperature limits
and to Masaru Nagisa for informing us about the argument method
to check operator-monotonicity of functions which appeared in describing the scaling flow.


\section{Positive Sesquilinear Forms}
A real world can be understood from a complex world with complex ones more flexible than real ones.
In linear algebra, real vector spaces are in a one-to-one correspondence with complex vector spaces furnished with
complex conjugation.
Recall that a complex conjugation in a complex vector space $K$ is a conjugate-linear involution,
which is denoted by $\overline{x}$ or $x^*$ in this paper.
Based on the second notation a complex vector space with complex conjugation
is also referred to as a *-vector space. A complex conjugation specifies a real structure
in the sense that it selects a real subspace $V = \text{Re} K = \{ x \in K; \overline{x} = x\}$
satisfying $K = V + iV$.
Conversely, given a real vector space $V$, its complexification
$V^\C = V + iV$ admits a canonical conjugation $\overline{v + iw} = v - iw$ ($v,w \in V$) so that
$V = \text{Re}(V^\C)$.
Thus there is a one-to-one correspondence between real vector spaces and *-vector spaces.

Complex conjugation can be then applied to various linear algebraic objects
such as linear maps and sesquilinear forms:
\begin{enumerate}
\item
  For a $\C$-linear map $\phi: V^\C \to W^\C$,
its complex conjugate $\overline{\phi}$ is defined by $\overline{\phi}(x) = \overline{\phi(\overline{x})}$
($x \in V^\C$), which is again a $\C$-linear map of $V^\C$ into $W^\C$.
\item
  For a sesquilinear form $S$ on $V^\C$, its complex conjugate $\overline{S}$ is defined by
  $\overline{S}(x,y) = \overline{S(\overline{x},\overline{y})}$ ($x,y \in V^\C$),
  which is again a sesquilinear form on $V^\C$. 
\end{enumerate}

A linear algebraic object is said to be \textbf{real}
if it is unchanged under the operation of complex conjugate.
Thus real $\C$-linear maps $V^\C \to W^\C$ are in a one-to-one correspondence with
$\R$-linear maps $V \to W$ and
real sesquilinear forms on $V^\C$ are in a one-to-one correspondence with $\R$-valued bilinear forms on $V$.
Note that, as for sesquilinear forms, symmetric/alternating forms on $V$ correspond to 
real forms on $V^\C$ which are hermitian/anti-hermitian. 

Let $\sP(V)$ be the set of positive sesquilinear forms on $V^\C$, which is a convex cone. 
For $S \in \sP(V)$, its complex conjugate $\overline{S}$ is again a positive form and 
$S + \overline{S}$ is a real positive form,
while $(S - \overline{S})/i$ is a real but antihermitian form. 
These correspond to a symmetric or alternating bilinear form on $V$ respectively and
called the real or imaginary part of $S$, though the correct usage should be applied to their halves.

Let $V_S^\C$ be the Hilbert space associated to the positive form $S + \overline{S}$
with the inner product of $V_S^\C$ denoted by $(\ ,\ )_S$.
The complex conjugation in $V^\C$ gives rise to an anti-unitary involution on $V_S^\C$
so that its real part $V_S$ is the closure of the image of $V$ in $V_S^\C$.
Moreover the alternating form $\sigma(x,y) = (S(x,y) - \overline{S}(x,y))/i$ on $V$ induces 
a continuous alternating form $\sigma_S$ on $V_S$
so that $\sigma_S([x],[y]) = \sigma(x,y)$ ($x,y \in V$), 
where $[x] = x + \ker(S + \overline{S})$ denotes the image of $x$ in $V_S^\C$.
In fact, 
in terms of the ratio operator $\mathbf{S} = (S + \overline{S})\backslash S$ on $V_S^\C$ defined by 
$([x]|\mathbf{S}[y])_S = S(x,y)$ ($x,y \in V^\C$), $\sigma_S$ is described by
$\sigma_S(\xi,\eta) = -i(\xi|(\mathbf{S} - \overline{\mathbf{S}})\eta)$ ($\xi, \eta \in V_S$).

Given an $\R$-linear map $\phi: V \to W$ or equivalently a *-linear map $V^\C \to W^\C$,
$\phi$ induces a right multiplication map $\sP(W) \to \sP(V)$ 
by $(T\phi)(v,v') = T(\phi v, \phi v')$,
which satisfies $\overline{T}\phi = \overline{T\phi}$.
In particular, the group $\text{GL}(V)$ acts on $\sP(V)$ so that it preserves complex conjugation.

\begin{Remark}
  {\small
    If $V^\C$ itself is a *-Hilbert space and $S(x,y) = (x|Ay)$ with $A$ bounded
    so that $A + \overline{A}$ has a bounded inverse,
    then $V_S^\C \ni [x] \mapsto \sqrt{A + \overline{A}} x \in V^\C$ is a unitary map and
    the operator $\mathbf{S}$ is
realized on $V^\C$ by $(A + \overline{A})^{-1/2} A (A + \overline{A})^{-1/2}$.} 
\end{Remark}

\section{Pusz-Woronowicz Functional Calculus}
Let $\alpha, \beta$ be positive (sesquilinear) forms 
on a complex vector space $H$. 
By a \textbf{representation} of the pair $\{ \alpha, \beta\}$, we shall mean 
a linear map $\iota: H \to \sH$ of $H$ into a Hilbert space $\sH$ 
together with positive (bounded) operators 
$A$, $B$ in $\sH$ such that $\iota(H)$ is dense in $\sH$ and $AB = BA$.
Such a representation is always possible:
Let $\sH$ be the Hilbert space associated to the positive form 
$\alpha + \beta$ and $\iota: H \to \sH$ be the canonical map. 
By the Riesz lemma, we have bounded operators $A$ and $B$ in $\sH$ 
representing $\alpha$ and $\beta$ respectively, which commute becuase of $A + B = 1_\sH$. 

A complex-valued Borel function $f$ on the closed first quadrant 
$[0,\infty)^2$ in $\R^2$ is called a \textbf{form function} if it is locally bounded and homogeneous 
of degree one; $f$ is bounded when restricted to a compact subset of $[0,\infty)^2$ 
and $f(rs,rt) = rf(s,t)$ for $r,s,t \geq 0$. 
Clearly $f(0,0) = 0$ and there is 
a one-to-one correspondence between form functions and bounded Borel functions on 
the unit interval $[0,1]$ by the restriction $f(t,1-t)$ ($0 \leq t \leq 1$). 
Let $\sF$ be the the vector space of form functions.

Given a form function $f$ and commuting positive operators $A$, $B$ in a Hilbert space $\sH$,
a normal operator $f(A,B)$ is defined by
\[
f(A,B)\xi = \int_{\sigma(A)\times \sigma(B)} f(s,t) e_A(ds) e_B(dt)\xi.  
\] 
Here $e_A(ds)$, $e_B(dt)$ are spectral measures of $A$, $B$,
which are projection-valued measures on their spectrum sets $\sigma(A)$, $\sigma(B)$ respectively. 

One of main results in Pusz-Woronowicz theory is the following.

\begin{Theorem}[\cite{PW1}] 
For $f \in \sF$, the sesquilinear form on $H$ defined by 
\[
\gamma(x,y) = (\iota(x)|f(A,B) \iota(y)), 
\quad 
x, y \in H
\]
does not depend on the choice of representations of $\{ \alpha, \beta\}$, 
which is the Pusz-Woronowicz functional calculus of $\alpha$, $\beta$
and will be reasonably denoted by $f(\alpha,\beta)$. 
\end{Theorem}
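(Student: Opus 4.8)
The plan is to show that every representation computes the same form as a single canonical one, namely the representation furnished by the Riesz lemma in which $\sH_0$ is the Hilbert space of $\alpha + \beta$, $\iota_0$ is the canonical map, and the representing operators $A_0, B_0$ satisfy $A_0 + B_0 = 1_{\sH_0}$. Fix an arbitrary representation $(\iota, \sH, A, B)$, so that $\alpha(x,y) = (\iota(x)|A\iota(y))$ and $\beta(x,y) = (\iota(x)|B\iota(y))$, and set $A+B =: D$. The starting point is the isometry $U: \sH_0 \to \sH$ determined by $U[x]_0 = D^{1/2}\iota(x)$, which is well defined and isometric because $\|D^{1/2}\iota(x)\|^2 = (\iota(x)|D\iota(x)) = (\alpha+\beta)(x,x) = \|[x]_0\|_0^2$. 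Since $\iota(H)$ is dense, the range of $U$ is the closure of $D^{1/2}\iota(H)$, which equals $(\ker D)^\perp$; thus $U$ is a unitary of $\sH_0$ onto the subspace $(\ker D)^\perp$, which reduces the commuting pair $A,B$.

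The second ingredient is a homogeneity reduction. Writing $g(u) = f(u,1-u)$ for the bounded Borel function on $[0,1]$ associated to $f$, I claim that in any representation one has
\[
f(A,B) = D^{1/2}\, g\!\left(A D^{-1}\right) D^{1/2},
\]
where $AD^{-1}$ is the positive contraction $\int s/(s+t)\, de(s,t)$ (with $e$ the joint spectral measure, the integrand continued by $0$ at the origin). This identity is verified pointwise on the joint spectrum: for $(s,t) \neq (0,0)$ the right-hand integrand is $(s+t)\,g(s/(s+t)) = (s+t) f(s/(s+t), t/(s+t)) = f(s,t)$ by degree-one homogeneity, while at the origin both sides vanish because $f(0,0)=0$ and $D^{1/2}$ kills $\ker D$. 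Here $g$ is bounded because $f$ is locally bounded and the segment $\{s+t=1\}$ is compact, so $f(A,B)$ and $g(AD^{-1})$ are genuine bounded operators. A direct matrix-element computation then shows that $U$ intertwines the restriction of $AD^{-1}$ to $(\ker D)^\perp$ with the canonical operator $A_0$: for all $x,y$,
\[
(U[x]_0 | AD^{-1} U[y]_0) = (\iota(x)| D^{1/2}(AD^{-1})D^{1/2}\iota(y)) = (\iota(x)|A\iota(y)) = \alpha(x,y) = ([x]_0|A_0[y]_0)_0,
\]
so that $U^*(AD^{-1})U = A_0$ and hence $U^* g(AD^{-1}) U = g(A_0)$ by the Borel functional calculus.

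Combining the two ingredients finishes the argument: using the factorization and then the intertwining,
\[
(\iota(x)|f(A,B)\iota(y)) = (D^{1/2}\iota(x)| g(AD^{-1}) D^{1/2}\iota(y)) = (U[x]_0| g(AD^{-1}) U[y]_0) = ([x]_0| g(A_0)[y]_0)_0,
\]
and in the canonical representation $A_0 + B_0 = 1$ gives $f(A_0,B_0) = g(A_0)$, so the right-hand side is precisely the canonical value $([x]_0|f(A_0,B_0)[y]_0)_0$, which does not involve $(\iota,\sH,A,B)$. I expect the main obstacle to be the careful handling of $\ker D = \ker A \cap \ker B$: one must check that $AD^{-1}$ is a well-defined bounded self-adjoint operator despite the non-invertibility of $D$, that $g(AD^{-1})$ is insensitive to the (arbitrary) choice of its values on $\ker D$ once sandwiched by $D^{1/2}$, and that $U$ really surjects onto $(\ker D)^\perp$ rather than a proper subspace. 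An alternative to the explicit factorization, should these spectral manipulations prove delicate, is to first establish representation independence for a generating family of form functions (for instance $f(s,t)=s$, $f(s,t)=t$, and the resolvent-type functions $st/(\lambda s + t)$, whose calculus is built from sums, products by $\alpha,\beta$, and inverses that are transparently intrinsic) and then extend to all bounded Borel $g$ on $[0,1]$ by Stone--Weierstrass together with bounded monotone convergence.
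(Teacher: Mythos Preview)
The paper does not supply its own proof of this theorem; it is quoted from Pusz--Woronowicz \cite{PW1} as a known result. So there is no in-paper argument to compare against.

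That said, your proof is correct and is essentially the standard one. The three steps are all sound: the isometry $U:\sH_0\to\sH$ with range $(\ker D)^\perp$ (density of $\iota(H)$ gives $\overline{D^{1/2}\iota(H)}=\overline{\operatorname{ran}D^{1/2}}=(\ker D)^\perp$); the homogeneity factorization $f(A,B)=D^{1/2}g(AD^{-1})D^{1/2}$, which is a pointwise identity in the joint functional calculus, including at the origin since $f(0,0)=0$; and the intertwining $U^*(AD^{-1})U=A_0$, which transports to $g$ because $(\ker D)^\perp$ reduces every operator in the abelian von Neumann algebra generated by $A,B$. The concerns you raise about $\ker D$ are handled exactly as you anticipate: $AD^{-1}:=\int s/(s+t)\,de(s,t)$ (value $0$ at the origin) is a bounded positive contraction commuting with $D$, and the $D^{1/2}$ sandwich kills any dependence on the convention chosen on $\ker D$. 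Your fallback via resolvent-type form functions and monotone convergence is also a legitimate route, closer in spirit to how one often presents the Kubo--Ando side of the story, but it is not needed here.
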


A real-valued function $f \in \sF$ is said to be \textbf{form concave} if
\[
(1-t)f(\alpha_0,\beta_0) + t f(\alpha_1,\beta_1) \leq f((1-t)\alpha_0 + t\alpha_1, (1-t)\beta_0 + t\beta_1)
\]
for $0 \leq t \leq 1$ and
four positive forms $\alpha_j$, $\beta_j$ ($j=0, 1$) defined on a common vector space.
The condition then turns out to be equivalent to requiring that
\[
f(\alpha,\beta)\phi \leq f(\alpha\phi,\beta\phi) 
\]
for any linear map $\phi: H \to K$ and positive forms $\alpha$, $\beta$ on $K$ (\cite{PW2}).
Here, for a sesquilinear form $\gamma$ on $K$, $\gamma\phi$ is a sesquilinear form on
$H$ defined by $(\gamma\phi)(x,y) = \gamma(\phi x,\phi y)$ ($x,y \in H$).

When $f \in \sF$ takes values in $[0,\infty)$,
the concavity is closely related to the notion of operator mean in
Kubo-Ando theory and equivalent to the operator monotonicity of the function $f(1,t)$ (\cite{Fu}, \cite{KA}).
If this is the case, thanks to the integral representation of operator-monotone functions,
we have an expression
\[
f(s,t) = \int_{[0,\infty]} (1+\lambda) \frac{st}{\lambda s + t}\, \mu(d\lambda)
\]
with $\mu$ a finite positive measure on $[0,\infty]$ so that 
\[
  f(\alpha,\beta) 
  = \int_{[0,\infty]} (1+\lambda) \frac{\alpha\beta}{\lambda \alpha + \beta}\, \mu(d\lambda). 
\]


Recall that an operator-monotone function $f:[0,\infty) \to [0,\infty)$ is characterized by the property
that it is analytically extended to a holomorphic function $\R + i(0,\infty) \to \R + i[0,\infty)$.
The power function $t^p$ ($p>0$) is therefore operator-monotone if and only if $0 < p \leq 1$
with the accompanied form concave function $f(s,t) = s^{1-p} t^p$.

\section{A Flow on Positive forms}
For a real number $r>0$, consider a homogeneous continuous function 
\[
  f_r(s,t) = \begin{cases} \frac{s^r(s - t)}{s^r - t^r} & (s \not= t)\\
    t/r &(s=t)
  \end{cases}, 
\]
which is strictly positive on $(0,\infty)\times (0,\infty)$ so that its sectional function
\[
  f_r(s,1-s) = f(x) = \frac{1-x}{(1+x)(1-x^r)}, 
  \quad
  x = \frac{1-s}{s} \iff s = \frac{1}{1+x} 
\]
is globally bounded on $x \geq 0$ with $f(0) = 1$, $f(\infty) = 0$
and pointwise decreasing in $r>0$. 

Note that $f_1(s,t) = s$ and 
$f_r(s,t) - s = (st^r - s^rt)/(s^r - t^r)$ is symmetric under the exchange $s \leftrightarrow t$. 

Given a positive form $S$ on $V^\C$ and $r>0$, introduce a positive form $S^{(r)}$ by 
\[
  S^{(r)} = f_r(S,\overline{S}) = \frac{S^r(S - \overline{S})}{S^r - {\overline S}^r}
  = S + \frac{S{\overline S}^r - S^r {\overline S}}{S^r - {\overline S}^r}.  
\]

We first observe that $S^{(r)} + \overline{S^{(r)}}$ is equivalent to $S + \overline{S}$ as a positive form.
In fact, it is a result of functional calculus based on a function
\[
  \frac{s^{r+1} - s^r\overline{s}}{s^r - {\overline s}^r}
  + \frac{{\overline s}^{r+1} - {\overline s}^rs}{{\overline s}^r - s^r}
  = s \frac{1 - (\overline{s}/s) + (\overline{s}/s)^r - (\overline{s}/s)^{r+1}}{1 - (\overline{s}/s)^r}  
\]
of $s$ and $\overline s$, whose restriction to $\overline s = 1 - s$ is described by the continuous function 
\[
h(x) =  \frac{(1-x)(1+x^r)}{(1+x)(1-x^r)}
\]
of a new parameter $x = (1-s)/s \in [0,\infty]$. 
Clearly $h$ is strictly positive and satisfies $h(0) = 1$, $h(\infty) = 1$,
whence $1/h$ as well as $h$ is bounded.

Let $T = S^{(a)}$ with $a>0$. We claim that $T^{(b)} = S^{(ab)}$ for $b>0$, i.e.,
the non-linear operation $S \mapsto S^{(r)}$ defines a flow on the convex set of positive forms on $V^\C$,
which is referred to as a \textbf{scaling flow}.

To see this, represent $S$ and $\overline{S}$ by commuting positive operators, which can be identified
with multiplication operators by parameters $s$ and $\overline s$ via a joint-spectral decomposition
in such a way that the complex conjugate of $s$ is given by $\overline s$ as a multiplication operator.
Then $T$ and $\overline{T}$ are realized as multiplication operators by the functions
\[
  t = \frac{s^a(s-\overline{s})}{s^a - {\overline s}^a},
  \quad 
  \overline{t} = \frac{{\overline s}^a(s-\overline{s})}{s^a - {\overline s}^a}
\]
respectively.

Now $T^{(b)}$ is represented as a multiplication  operator of the function
\[
  \frac{t^b(t-\overline{t})}{t^b - {\overline t}^b}, 
 \]
which, with a straightforward rearrangement, turns out to be equal to 
\[
  \frac{s^{ab}(s-\overline{s})}{s^{ab} - {\overline s}^{ab}},
\]
showing $T^{(b)} = S^{(ab)}$.

\begin{Remark}
  When $S = \overline{S}$ ($\sigma \equiv 0$),
  the flow $S^{(r)}$ is simply $r^{-1} S$. 
\end{Remark}

\begin{Remark}
  {\small
If $S$ is given by a bounded positive operator $A$ on a *-Hilbert space $V^\C$
with $A + \overline{A}$ having a bounded inverse,
then $S^{(r)}$ is realized on $V^\C$ by the operator
\[
\frac{((A + \overline{A})^{-1/2} A(A + \overline{A})^{-1/2})^r
  (A + \overline{A})^{-1/2}(A - \overline{A}) (A + \overline{A})^{-1/2}}
  {((A + \overline{A})^{-1/2} A(A + \overline{A})^{-1/2})^r
  -((A + \overline{A})^{-1/2} \overline{A}(A + \overline{A})^{-1/2})^r}. 
\]
}
\end{Remark}

Here are some simple facts on scaling flows:
\begin{enumerate}
\item
  As already observed, positive forms in a scaling orbit have equivalent real parts.
\item
  Since $S^{(r)} - S = \frac{S{\overline S}^r - S^r\overline{S}}{S^r - {\overline S}^r}$ is real,
the scaling flow preserves the imaginary part $(S - \overline{S})/i$,
which corresponds to an alternating form on $V$.
In other words, given an alternating form $\sigma$ on a real vector space $V$,
if we define a convex subset of
$\sP(V)$ by $\sP(V,\sigma) = \{ S \in \sP(V); S - \overline{S} = i\sigma\}$,
then the scaling flow leaves $\sP(V,\sigma)$ invariant.
\item
Let $V = V_1 \oplus \cdots \oplus V_n$ and $S = S_1 \oplus \cdots \oplus S_n$ with
$S_j \in \sP(V_j)$. Then $S^{(r)} = S_1^{(r)} \oplus \cdots \oplus S_n^{(r)}$.
\item
Let $\phi: V \to W$ be a $\R$-linear map or equivalently a *-linear map of $V^\C$ into $W^\C$.
For a positive form $T \in \sP(W)$,
$(T\phi)^{(r)}$ is generally different from $T^{(r)}\phi$
but here is a simple criterion for their coincidence:
Let $W_T^\C$ be the Hilbert space associated to
$T + \overline{T}$ and let $E$ be the projection to the closure of $\phi(V^\C)$ in $W_T^\C$.
If the ratio operator $(T + \overline{T})\backslash T$ commutes with $E$,
then $(T\phi)^{(r)} = T^{(r)}\phi$ ($r>0$).

This is the case when $\phi$ is an isomorphism.
In particular, the natural action of $\Aut(V,\sigma)$ on $\sP(V,\sigma)$ commutes with the operation of
taking scaling flow.

Another relevant case is the canonical map $\phi:V \to V_S$ ($S \in \sP(V)$) with $T \in \sP(V_S)$
specified by $T(\phi(x),\phi(y)) = S(x,y)$ and continuity.
Furthermore, 
under the orthogonal decomposition $V_S = V_{S = \overline{S}} \oplus V_{S \not= \overline{S}}$ according
to the point spectrum $(S + \overline{S})\backslash S = 1/2$ and the remaining,
the flow $T^{(r)}$ is split into two parts so that it is represented on $V_{S = \overline{S}}$ by
a scalar operator $1/2r$. When factored out this trivial part, we are reduced to
the case that $1/2$ is not an eigenvalue of $(S + \overline{S})\backslash S$.
A positive form $S$ fulfilling this condition is said to be \textbf{center-free}.

The center-freeness of $S$ is equivalent to 
the non-degeneracy of the completed alternating form $\sigma_S$ on $V_S$. 


Recall that each $S \in \sP(V,\sigma)$ gives rise to a free state $\varphi_S$ on the CCR-algebra
of $(V_S,\sigma_S)$ 
and the center $Z$ of the GNS representation $\pi_S$ of $\varphi_S$ is generated by $\pi_S(\ker\sigma_S)$. 
The center-freeness of $S$ is then equivalent to the triviality of $Z$.
\end{enumerate}

\begin{Proposition}
  A positive form $S$ is extremal in the convex set $\sP(V,\sigma)$
  if and only if $\sqrt{S\overline{S}} = 0$. 
\end{Proposition}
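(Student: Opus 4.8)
The plan is to translate extremality into a spectral condition on the ratio operator $\mathbf{S}$ on $V_S^\C$ and to read off $\sqrt{S\overline{S}}=0$ as the statement that $\mathbf{S}$ is a projection. First I would record that all members of $\sP(V,\sigma)$ share the same imaginary part, so that if $S=\frac12(S_0+S_1)$ with $S_0,S_1\in\sP(V,\sigma)$, the form $R=\frac12(S_0-S_1)$ is \emph{real}; conversely a real form $R$ with $S\pm R\geq0$ yields such a splitting, with $S_0\neq S_1$ iff $R\neq0$. Thus $S$ is extremal iff the only real form $R$ with $-S\leq R\leq S$ is $R=0$. The inequalities force $|R(x,x)|\leq S(x,x)\leq(S+\overline{S})(x,x)$, so every such $R$ descends to a bounded self-adjoint operator $\mathbf{R}$ on $V_S^\C$ commuting with the conjugation $J$ (reality), characterized by $-\mathbf{S}\leq\mathbf{R}\leq\mathbf{S}$; here $\mathbf{S}$ is the ratio operator, $0\leq\mathbf{S}\leq1$ and $J\mathbf{S}J=\overline{\mathbf{S}}=1-\mathbf{S}$. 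Via Pusz--Woronowicz calculus with the representation $A=\mathbf{S}$, $B=1-\mathbf{S}$, the form $\sqrt{S\overline{S}}$ is represented by $\sqrt{\mathbf{S}(1-\mathbf{S})}$, so $\sqrt{S\overline{S}}=0$ is equivalent to $\mathbf{S}(1-\mathbf{S})=0$, i.e. to $\mathbf{S}$ being a projection $P$.

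For the direction that non-vanishing of $\sqrt{S\overline{S}}$ destroys extremality, I would exhibit an explicit perturbation: the parallel sum $R=S\overline{S}/(S+\overline{S})$, represented on $V_S^\C$ by $\mathbf{R}=\mathbf{S}(1-\mathbf{S})$. It is real because $J\mathbf{S}J=1-\mathbf{S}$ gives $J\mathbf{R}J=(1-\mathbf{S})\mathbf{S}=\mathbf{R}$; it is positive and satisfies $\mathbf{R}\leq\mathbf{S}$ (the parallel sum is dominated by each summand), hence $-\mathbf{S}\leq\mathbf{R}\leq\mathbf{S}$; and it is nonzero precisely when $\mathbf{S}(1-\mathbf{S})\neq0$. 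Therefore $S\pm R\in\sP(V,\sigma)$ are distinct with midpoint $S$, so $S$ is not extremal.

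For the converse, assume $\mathbf{S}=P$ is a projection and let $\mathbf{R}$ be real self-adjoint with $-P\leq\mathbf{R}\leq P$. Writing $\mathbf{R}$ in block form relative to $V_S^\C=PV_S^\C\oplus(1-P)V_S^\C$, positivity of $P-\mathbf{R}$ and of $P+\mathbf{R}$ forces the $(1-P)$-diagonal block of $\mathbf{R}$ to be both $\leq0$ and $\geq0$, hence $0$; a positive operator with a vanishing diagonal block has vanishing off-diagonal blocks, so $\mathbf{R}$ is supported on $\mathrm{ran}\,P$. Now I invoke reality once more: since $JPJ=1-P$, the conjugation $J$ interchanges $\mathrm{ran}\,P$ and $\ker P$, so $J\mathbf{R}J$ is supported on $\ker P$; the identity $J\mathbf{R}J=\mathbf{R}$ then forces $\mathbf{R}=0$. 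Hence $S$ is extremal.

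The main obstacle is the converse, and specifically the role of reality: the order condition $-P\leq\mathbf{R}\leq P$ alone permits any self-adjoint contraction supported on $\mathrm{ran}\,P$, so extremality genuinely depends on the constraint that admissible perturbations be real. The decisive point is that $J\mathbf{S}J=1-\mathbf{S}$ makes $J$ swap $\mathrm{ran}\,P$ with $\ker P$ exactly when $\mathbf{S}$ is a projection, which is what annihilates the perturbation; I would take care to justify the descent of forms to $V_S^\C$ and the block-positivity step, both of which are routine once the domination $|R|\leq S+\overline{S}$ is in hand.
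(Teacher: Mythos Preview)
Your argument is correct and follows the same strategy as the paper: pass to the ratio operator $\mathbf{S}$ on $V_S^\C$, identify $\sqrt{S\overline{S}}=0$ with $\mathbf{S}$ being a projection, and in the extremal direction combine the order constraint $-\mathbf{S}\leq\mathbf{R}\leq\mathbf{S}$, block positivity, and reality $J\mathbf{R}J=\mathbf{R}$ to force $\mathbf{R}=0$. The only differences are cosmetic: for the non-extremal direction the paper uses a spectral cutoff $h=\epsilon\,1_{[\epsilon,1-\epsilon]}(\mathbf{S})$ rather than your cleaner choice $\mathbf{R}=\mathbf{S}(1-\mathbf{S})$, and in the extremal direction the paper invokes reality midway (to pass from $\overline{E}h\overline{E}=0$ to $EhE=0$) while you invoke it at the end (to kill the block supported on $\operatorname{ran}P$); the substance is identical.
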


\begin{proof}
With the notation at the end of \S 1, 
$\sqrt{S\overline{S}}(x,y) = ([x],\sqrt{\mathbf{S}\overline{\mathbf{S}}} [y])_S$. 
If $\int_{[0,1]} s\, e(ds)$ denotes the spectral representation of $\mathbf{S}$,
then $\overline{\mathbf{S}} = \int_{[0,1]} (1-s) e(ds)$ and 
$\sqrt{\mathbf{S}\overline{\mathbf{S}}} = \int_{[0,1]} \sqrt{s(1-s)} e(ds) \not= 0$ implies 
$e([\epsilon,1-\epsilon]) \not= 0$ for some $0 < \epsilon < 1/2$. 
To see the non-extremality of $S$,
we seek for a non-zero hermitian operator $h = \overline{h}$ satisfying $\mathbf{S}\pm h \geq 0$
so that $([x],(\mathbf{S}\pm h)[y])_S$ are covariance forms and $S$ is their average. 
Assume that $h = \int f(s)\, e(ds)$ with $f$ bounded and real.
Then the condition takes the form  
$f(s) = f(1 - s)$ and $-s \leq f(s) \leq s$, which is satisfied by 
the choice $f = \epsilon 1_{[\epsilon,1-\epsilon]}$ for example. 

Conversely, assume that $\sqrt{S\overline{S}} = 0$ and $S = (R + T)/2$ with 
$R, T \in \sP(V,\sigma)$. 
Then $\mathbf{S} + \overline{\mathbf{S}} = 1$ and $\sqrt{\mathbf{S}\overline{\mathbf{S}}} = 0$
show that $\mathbf{S}$ is a projection $E$ in $V_S^\C$. 
Since $R/2 \leq S$ and $T/2 \leq S$, positive forms $R$ and $T$ 
are represented by bounded positive operators on $V_S^\C$;
we can find a hermitian operator $h$ 
in $V_S^\C$ so that $R(x,y) = ([x],(E - h)[y])_S$ and $T(x,y) = ([x],(E + h)[y])_S$. 
From $E - h - \overline{E - h} = E - \overline{E}$,
$h = \overline{h}$. 
By positivity of $R$ and $T$, we have $-E \leq h \leq E$ 
and then
\[
-\overline{E} E \overline{E} \leq \overline{E} h \overline{E} 
\leq \overline{E} E \overline{E} 
\]
implies $\overline{E} h \overline{E} = 0$ and, by taking the conjugation, $EhE = 0$. 
Thus, if we set $C = \overline{E} h E$, the operator $T$ has the matrix representation 
\[
\begin{pmatrix}
1 & C\\
C^* & 0
\end{pmatrix}
\]
with respect to the decomposition $V_S^\C = EV_S^\C \oplus \overline{E}V_S^\C$,
which is positive only if $C = 0$, showing $h = C + C^* = 0$. 
\end{proof} 

\begin{Remark}
{\small  There is no extremal point in $\sP(V)$ other than $0$.} 
\end{Remark}

\begin{Theorem}\label{freeze}~ 
  \begin{enumerate}
  \item
    A positive form $S \in \sP(V,\sigma)$ is a fixed point of the scaling flow
    if and only if $S$ is extremal in $\sP(V,\sigma)$.
  \item
  Each flow $S^{(r)}$ converges pointwise to an extremal point $S^{(\infty)}$,
where $S^{(\infty)}$ is a positive form defined by 
$S^{(\infty)}(x,y) = ([x],(2\mathbf{S}-1)_+ [y])_S$ 
with $(2\mathbf{S}-1)_+$ denoting the cut of $2\mathbf{S} - 1$ by the spectral range $(0,1]$
and satisfies $\sqrt{S^{(\infty)} \overline{S^{(\infty)}}} = 0$.
\end{enumerate}
\end{Theorem}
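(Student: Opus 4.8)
The plan is to transfer the entire statement into the functional calculus of the single ratio operator $\mathbf{S}$ on $V_S^\C$. Writing $\int_{[0,1]} s\, e(ds)$ for the spectral resolution of $\mathbf{S}$, we have $\overline{\mathbf{S}} = 1 - \mathbf{S}$, so representing $S$ by $\mathbf{S}$ and $\overline{S}$ by $1-\mathbf{S}$ shows that $S^{(r)} = f_r(S,\overline{S})$ is represented by the operator $g_r(\mathbf{S})$, where
\[
  g_r(s) = f_r(s,1-s) = \frac{s^r(2s-1)}{s^r - (1-s)^r} \quad (s \ne 1/2), \qquad g_r(1/2) = \frac{1}{2r}.
\]
Thus $S^{(r)}(x,y) = \int_{[0,1]} g_r(s)\,([x]|e(ds)[y])_S$, and everything reduces to the scalar analysis of $g_r$.

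For (i), I would note that $S$ is a fixed point exactly when $g_r(s) = s$ holds $e$-almost everywhere for every $r>0$ (indeed, as one checks, already for a single $r \ne 1$). Clearing denominators, $g_r(s) = s$ is equivalent to
\[
  s(1-s)\bigl[(1-s)^{r-1} - s^{r-1}\bigr] = 0,
\]
whose only roots for $r \ne 1$ are $s \in \{0,1\}$; the root $s = 1/2$ of the bracket is spurious, since there $g_r(1/2) = 1/(2r) \ne 1/2$. Hence $S$ is a fixed point iff $e$ is carried by $\{0,1\}$, i.e. iff $\mathbf{S}$ is a projection, i.e. iff $\mathbf{S}(1-\mathbf{S}) = 0$, equivalently $\sqrt{S\overline{S}} = 0$. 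By the Proposition this is precisely extremality of $S$ in $\sP(V,\sigma)$.

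For (ii), the convergence is a bounded-convergence argument. One verifies $0 \le g_r \le 1$ on $[0,1]$ for $r \ge 1$ and that $g_r(s) \to (2s-1)_+$ pointwise: for $s > 1/2$ the factor $((1-s)/s)^r \to 0$ gives the limit $2s-1$, while for $s \le 1/2$ the numerator is dominated and the limit is $0$. Since the complex measures $([x]|e(\cdot)[y])_S$ have finite total variation, dominated convergence yields $S^{(r)}(x,y) \to ([x]|(2\mathbf{S}-1)_+[y])_S = S^{(\infty)}(x,y)$ for each $x,y$. To see extremality of $S^{(\infty)}$, I would identify its conjugate: from the preservation of the imaginary part, $\overline{S^{(\infty)}} = S^{(\infty)} - (S - \overline{S})$ is represented by $(2s-1)_+ - (2s-1) = (1-2s)_+$, so in particular $S^{(\infty)} - \overline{S^{(\infty)}} = S - \overline{S}$ and $S^{(\infty)} \in \sP(V,\sigma)$. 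The two commuting positive operators $(2\mathbf{S}-1)_+$ and $(1-2\mathbf{S})_+$ on $V_S^\C$, together with the dense canonical map $x \mapsto [x]$, form a representation of the pair $\{S^{(\infty)}, \overline{S^{(\infty)}}\}$; since $(2s-1)_+(1-2s)_+ \equiv 0$ their product vanishes, so the representation-independence of the Pusz--Woronowicz calculus gives $\sqrt{S^{(\infty)}\overline{S^{(\infty)}}} = 0$. By the Proposition, $S^{(\infty)}$ is extremal.

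The scalar limits and bounds on $g_r$ are routine. The step requiring the most care is the last one, where I invoke representation-independence of the PW calculus to evaluate $\sqrt{S^{(\infty)}\overline{S^{(\infty)}}}$ in the convenient representation $\bigl((2\mathbf{S}-1)_+,(1-2\mathbf{S})_+\bigr)$ on $V_S^\C$ rather than in the intrinsic space $V_{S^{(\infty)}}^\C$ associated with $S^{(\infty)} + \overline{S^{(\infty)}}$. One should also watch the eigenvalue $s = 1/2$ (the center), whose contribution $g_r(1/2) = 1/(2r)$ is never fixed but correctly tends to $0$, so that the non-center-free case is subsumed in the same computation without separate treatment.
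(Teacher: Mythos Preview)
Your proof is correct and follows essentially the same route as the paper: both reduce everything to the spectral calculus of $\mathbf{S}$ and analyze the scalar function $g_r(s)=f_r(s,1-s)$ on $[0,1]$, identifying fixed points with the support condition $e(\{0,1\})=1$ and obtaining the limit $(2s-1)_+$ pointwise. The only cosmetic differences are that the paper invokes the monotonicity of $g_r$ in $r$ (noted earlier in the text) rather than your uniform bound $0\le g_r\le 1$ for $r\ge 1$, and that you spell out the verification of $\sqrt{S^{(\infty)}\overline{S^{(\infty)}}}=0$ via representation-independence of the Pusz--Woronowicz calculus, which the paper leaves implicit.
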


\begin{proof}
(i) Suppose that $S^{(r)} = S$ for some $r \not= 1$. In terms of the spectral measure $e(ds)$ 
  of $\mathbf{S}$, this is equivalent to
  \[
    \int_{[0,1]} \frac{s(1-s)^r - s^r(1-s)}{s^r - (1-s)^r}\, e(ds) = 0.
\]
  Since the integrand is $(x^r-x )/((1+x)(1-x^r))$ as a function of $x = (1-s)/s$, it vanishes if and only if
  the spectral measure is supported by $\{0,1\}$, i.e., $\mathbf{S}\overline{\mathbf{S}} = 0$.
  
  (ii)  With the same spectral expression, $S^{(r)}$ is represented by a positive operator
  \[
    \int_{[0,1]} \frac{s^r(2s-1)}{s^r - (1-s)^r}\, e(ds)
  \]
  and the assertion follows from the fact that the integrand is decreasing in $r>0$ and 
  \[
\lim_{r \to \infty} 
\frac{s^{r+1} - s^r(1-s)}
{s^r - (1-s)^r} 
= 
\begin{cases}
0 &\text{if $0 \leq s \leq 1/2$,}\\
2s - 1 &\text{if $1/2 < s \leq 1$.}
\end{cases}
\]
\end{proof}


\begin{Remark}
  {\small
    The positive function $f_r(1,t) = \frac{1-t}{1-t^r}$ on $[0,\infty)$ is operator-monotone
    if and only if $0 < r \leq 1$.
    Consequently, $f_r(s,t)$ is form concave if and only if $0 < r \leq 1$.
  
  Related to this fact,
  the symmetric homogeneous function $g_r(s,t) = \frac{r}{1-r} \frac{st^r - s^rt}{s^r - t^r}$
  (the value for $r=1$ or $s=t$ being assigned by continuity)
  may arouse some interest, whose accompanied function $g_r(1,t)$ ($t \geq 0$) 
  is operator-monotone if and only if $0 < r \leq 2$.
  In other words, $g_r(s,t)$ is form concave if and only if $0 < r \leq 2$. Note that 
  \[
    g_1(s,t) = st \frac{\log s - \log t}{s-t},
    \quad 
    g_r(t,t) = t.
  \]
  See \cite{NW} for more information on operator-monotone functions of this type.

  In \cite{HK}, mean operation is assigned to continuous form functions
  for which $f(s,1)$ and $f(1,t)$ are increasing.
  We notice that both $g_r(s,1) = g_r(1,s)$ and $f_r(s,1)$ are always increasing, whereas
  $f_r(1,t)$ is increasing if and only if $0 < r < 1$ ($f_r(1,t)$ being decreasing for $r>1$).   
  }
\end{Remark}


\section{Free States and Scaled Dynamics}
Given an alternating vector space $(V,\sigma)$, let $C(V,\sigma)$ be the C*-algebra generated by
the Weyl commutation relations: $C(V,\sigma)$ is universally generated by a unitary family $\{ e^{ix}\}_{x \in V}$
subject to the condition (Weyl commutation relations) 
$e^{ix} e^{iy} = e^{-i\sigma(x,y)/2} e^{i(x+y)}$ ($x,y \in V$).
Thanks to the universality, a linear map $\phi$ of $(V,\sigma)$ into another
alternating vector space $(V',\sigma')$ satisfying $\sigma'\phi = \sigma$ gives rise to a *-homomorphism
$C(\phi)$ of $C(V,\sigma)$ into $C(V',\sigma')$, which turns out to be injective for an injective $\phi$.
In particular, we have a natural group monomorphism of $\Aut(V,\sigma)$ into $\Aut(C(V,\sigma))$.

A big fault in $C(V,\sigma)$ is that, 
although $e^{itx}$ ($t \in \R$) is a one-parameter group of unitaries in $C(V,\sigma)$,
no continuity on $t \in \R$ is guaranteed in $e^{itx}$.
We compensate this by assuming continuity on representations and states instead. 

Each $S \in \sP(V,\sigma)$ gives a state $\varphi_S$ on $C(V,\sigma)$ specified by
$\varphi_S(e^{ix}) = e^{-S(x,x)/2}$,
which is referred to as a \textbf{free state} of \textbf{covariance form} $S$. 
Note that $\varphi_S(e^{itx})$ is continuous in $t \in \R$ and
the associated GNS representation $\pi_S$ enjoys the continuity of $\pi_S(e^{itx})$ on $t \in \R$.

If a linear map $\phi: V \to V'$ relates $S \in \sP(V,\sigma)$ and $S' \in \sP(V',\sigma')$ by
$S'\phi = S$, then $\varphi_S$ comes from $\varphi_{S'}$ through $C(\phi)$. 
This is the case for the canonical map $(V,\sigma) \to (V_S,\sigma_S)$ associated with $S \in \sP(V,\sigma)$
and a free state $\varphi_S$ on $C(V,\sigma)$ can be analysed through $C(V_S,\sigma_S)$. 

Recall that a state $\varphi$ on a C*-algebra $A$
satisfies the KMS condition with respect to a one-parameter automorphism group $\theta_t$ of $A$ if
$\varphi(a\theta_t(b))$ ($a,b \in A$)
is a continuous function of $t \in \R$ and analytically continued to the strip region
$\{ z \in \C; -1 \leq \text{Im}(z) \leq 0 \}$ so that
$\varphi(a\theta_{t-i}(b)) = \varphi(\theta_t(b)a)$ ($t \in \R$).
A state $\varphi$ is called a KMS state if it satisfies the KMS condition with respect to some $(\theta_t)$. 
For a KMS state $\varphi$, a one-parameter automorphism group fulfilling the KMS condition
is substantially determined by $\varphi$ (Takesaki's result \cite[Theorem~5.3.10]{BR2}), 
whereas different states may share a common $(\theta_t)$ to meet the KMS condition. 

Now we assume that $S$ is \textbf{non-boundary}
in the sense that the ratio operator $\mathbf{S}$ has a trivial kernel.
Since $\overline{\mathbf{S}} = 1 - \mathbf{S}$, the condition is equivalent to requiring
$\ker(1- \mathbf{S}) = \{ 0\}$. A self-adjoint operator $h$ on $V_S^\C$ is now introduced by
\[
  h = \log\frac{\overline{\mathbf{S}}}{\mathbf{S}} \iff \mathbf{S} = \frac{1}{1+ e^h}, 
\]
which satisfies $\overline{h} = - h$
($\overline{h}$ being the complex conjugate of $h$ and not the closure of $h$).
The associated one-parameter group of unitaries $(e^{ith})$ in $V_S^\C$ preserves 
the alternating form $\sigma_S$, whence it induces a one-parameter group $(\theta_t)$ of
*-automorphisms of $C(V_S,\sigma_S)$.
It is then a well-known fact (cf.~\cite[Example~5.3.2]{BR2}) that the free state $\varphi_S$ on $C(V_S,\sigma_S)$
satisfies the KMS condition with respect to $(\theta_t)$.

According to its physical interpretation, the KMS condition for the scaled automorphism group 
$(\theta_{rt})$ describes equilibrium states of inverse temparature multiplied by the factor $r>0$ compared to
the original one $(\theta_t)$.

Thus a free state of a generic covariance form $T$ satisfies the KMS condition for $(\theta_{rt})$ if 
$S + \overline{S} \sim T + \overline{T}$ and $\overline{\mathbf{T}}/\mathbf{T} = e^{rh}$,
which we shall now rewrite in terms of $S$. 
Notice here that $V_S = V_T$ and $\sigma_S = \sigma_T$. 
By passing to the completed space $V_S = V_T$, we may assume that $V = V_S = V_T$ and
$\sigma = \sigma_S = \sigma_T$ for this purpose.

In the extreme case $\sigma = 0$,
both $S = \overline{S}$ and $T = \overline{T}$ give rise to the trivial automorphism group,
losing the meaning of scaling effect.
To eliminate this somewhat degenerate situation, we further assume that the common $\sigma$ is non-degenerate.
In view of $\frac{S - \overline{S}}{S + \overline{S}} = \mathbf{S} - \overline{\mathbf{S}} = 2\mathbf{S} - 1$,
non-degeneracy of $\sigma$ is equivalent to $\ker(2\mathbf{S} - 1) = \{ 0\}$,
i.e., $1/2$ is not an eigenvalue of $\mathbf{S}$ and similarly for $\mathbf{T}$. 

From the condition $e^{rh} = \overline{\mathbf{T}}/\mathbf{T}$,
the positive self-adjoint operator $e^{rh}$ commutes with $\mathbf{T}$ and, 
for $x \in V^\C$ and $y$ in the domain of $e^{rh}$,
\[
  T(x,e^{rh}y) = \overline{T}(x,y) = T(x,y) - S(x,y) + \overline{S}(x,y). 
\]
Thus the ratio operator $(S + \overline{S}) \backslash T$ is given by 
\[
  \frac{1}{1-e^{rh}} \frac{S - \overline{S}}{S + \overline{S}}
  = \frac{1}{1 - \left( \frac{1-\mathbf{S}}{\mathbf S} \right)^r}(2\mathbf{S} - 1)
  = \frac{\mathbf{S}^r}{\mathbf{S}^r - (1-\mathbf{S})^r} (\mathbf{S} - (1 - \mathbf{S})) 
\]
and we conclude that 
\[
  T(x,y) = (x, \frac{\mathbf{S}^r}{\mathbf{S}^r - \overline{\mathbf{S}}^r} (\mathbf{S} - \overline{\mathbf{S}}) y)_S
  = S^{(r)}(x,y). 
\]
Here notice that both $S^{(r)}$ and $T$ are continuous on Hilbertian space $V_S^\C = V_T^\C$.

\begin{Theorem}
  For a non-boundary $S \in \sP(V,\sigma)$, so is $S^{(r)}$  and,
  if the associated KMS automorphism group of $C(V_S,\sigma_S)$ is denoted by $\theta_t^{(r)}$,
  then $\theta^{(r)}_t = \theta_{rt}$ for $t \in \R$ and $r>0$. 
\end{Theorem}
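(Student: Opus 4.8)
The plan is to compute the ratio operator of $S^{(r)}$ explicitly in terms of the Hamiltonian $h$ of $S$; both the non-boundariness of $S^{(r)}$ and the identity $\theta^{(r)}_t = \theta_{rt}$ will then drop out. In effect this is the computation of the preceding paragraphs run in the forward direction, starting from $S^{(r)}$ rather than characterizing it. Throughout I would work in the joint spectral picture of the commuting operators $\mathbf{S}$ and $\overline{\mathbf{S}} = 1 - \mathbf{S}$ on $V_S^\C$, writing the relevant forms as functions of the single variable $x = \overline{\mathbf{S}}/\mathbf{S} = e^h$.

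With respect to the inner product $(\,\cdot\,|\,\cdot\,)_S$, the form $S^{(r)}$ is represented by $f(x) = (1-x)/\bigl((1+x)(1-x^r)\bigr)$, while, as already established when proving the flow property, $S^{(r)} + \overline{S^{(r)}}$ is represented by $h(x) = (1+x^r)(1-x)/\bigl((1-x^r)(1+x)\bigr)$, which is bounded above and below. Since $(\,\cdot\,|\,\cdot\,)_{S^{(r)}}$ is precisely the form $S^{(r)}+\overline{S^{(r)}}$, represented on $(V^\C,(\,\cdot\,|\,\cdot\,)_S)$ by multiplication by $h(x)$, the defining relation of the ratio operator forces $h(x)\,\mathbf{S^{(r)}} = f(x)$, and a one-line simplification yields
\[
  \mathbf{S^{(r)}} = \frac{f(x)}{h(x)} = \frac{1}{1 + x^r} = \frac{1}{1 + e^{rh}}.
\]
Being a bounded function of the commuting pair $\{\mathbf{S},\overline{\mathbf{S}}\}$, this operator is well defined and self-adjoint, with one and the same kernel, whether viewed on $(V^\C,(\,\cdot\,|\,\cdot\,)_S)$ or on the equivalent space $(V^\C,(\,\cdot\,|\,\cdot\,)_{S^{(r)}})$.

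Both assertions now follow. Under the non-boundary hypothesis on $S$, the operator $h$ is a genuine (possibly unbounded) self-adjoint operator with spectrum in $\R$, so $\lambda \mapsto (1+e^{r\lambda})^{-1}$ is strictly positive on the spectrum of $h$; hence $(\xi\,|\,\mathbf{S^{(r)}}\xi)_S = 0$ forces $\xi = 0$, and $S^{(r)}$ is non-boundary. Moreover $\overline{\mathbf{S^{(r)}}} = 1 - \mathbf{S^{(r)}} = e^{rh}/(1+e^{rh})$, so $\overline{\mathbf{S^{(r)}}}/\mathbf{S^{(r)}} = e^{rh}$ and the Hamiltonian of $S^{(r)}$ is $h^{(r)} = \log\bigl(\overline{\mathbf{S^{(r)}}}/\mathbf{S^{(r)}}\bigr) = rh$.

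To finish, I would note that since the scaling flow preserves the imaginary part and the real parts $S+\overline{S}$ and $S^{(r)}+\overline{S^{(r)}}$ are equivalent, the completed symplectic spaces agree, $(V_{S^{(r)}},\sigma_{S^{(r)}}) = (V_S,\sigma_S)$; thus $\theta^{(r)}$ and $\theta$ are automorphism groups of the single algebra $C(V_S,\sigma_S)$. The group $\theta^{(r)}_t$ is induced by $(e^{ith^{(r)}}) = (e^{i(rt)h})$, which is exactly the one-parameter unitary group inducing $\theta_{rt}$, whence $\theta^{(r)}_t = \theta_{rt}$. I expect the only genuine care to lie in reconciling the two equivalent but distinct inner products and the domains of the unbounded $h$ and $e^{rh}$; once it is observed that every operator in play is a function of the common commuting pair $\{\mathbf{S},\overline{\mathbf{S}}\}$, these points become purely notational.
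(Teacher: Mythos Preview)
Your proposal is correct and is essentially the computation the paper carries out in the paragraphs immediately preceding the theorem, only run in the forward direction: the paper fixes a covariance form $T$ whose KMS group is $(\theta_{rt})$ and identifies it as $S^{(r)}$, whereas you start from $S^{(r)}$, compute its ratio operator $\mathbf{S^{(r)}} = 1/(1+e^{rh})$, and read off $h^{(r)} = rh$. The content is the same algebraic identity $f(x)/h(x) = 1/(1+x^{r})$ with $x = e^{h}$.

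The one visible difference is that the paper's short proof invokes the splitting $V_S = V_{S=\overline S}\oplus V_{S\neq\overline S}$, because the preceding derivation had explicitly assumed $\sigma_S$ non-degenerate (i.e.\ $\ker(2\mathbf S-1)=0$); the degenerate summand is then handled separately by observing that the automorphism group fixes it. Your forward computation needs no such case distinction: the formula $\mathbf{S^{(r)}} = 1/(1+e^{rh})$ is valid on all of $V_S^\C$, and on the eigenspace $h=0$ it simply returns $1/2$, with $e^{irth}$ acting trivially there. So your route is marginally cleaner in that it is uniform across the center. Your closing remark that the equivalence of inner products causes no trouble because every operator in sight is a Borel function of the bounded commuting pair $\{\mathbf S,\overline{\mathbf S}\}$ is exactly the right justification; concretely, $P = h(\mathbf S)$ commutes with $\mathbf S$, so conjugation by $P^{1/2}$ intertwines the two Hilbert structures while fixing all the functional calculi.
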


\begin{proof}
  With the decomposition, $V_S = V_{S = \overline{S}} \oplus V_{S \not= \overline{S}}$,
  the KMS condition is satisfied
  in such a way that $e^{ix}$ ($x \in V_{S = \overline{S}}$) are left invariant
  under the automorphism group. 
\end{proof}

Now the physical meaning of Theorem~\ref{freeze} is clear; 
free states of covariance forms $S^{(r)}$ and $S$ describe equilibrium states of
the same dynamics with their temperatures related by scaling of a factor $r>0$ and the limit $r \to \infty$ freezes
the free state of covariance form $S$ to zero temperature,
resulting in a Fock state whose covariance form $S^{(\infty)}$ is described by the spectral projection
of $\mathbf{S}$ for the spectral range $(1/2,1]$. Notice that, if $\ker(2\mathbf{S} - 1) = \{ 0\}$
and $S$ is not a Hilbert-Schmidt perturbation of Fock form,
free states of covariance form $S^{(r)}$ generates disjoint factors for different $r$'s
(see \cite{AY} for example).

\section{Powers of Density Operators}


When $V$ is finite-dimensional, we can regularize $C(V,\sigma)$ by integration to get
a C*-algebra $C^*(V,\sigma)$ whose *-representations are exactly continuous representations of $C(V,\sigma)$.

Representations of $C^*(V,\sigma)$ are simple enough to see 
$C^*(V,\sigma) \cong C_0((\ker \sigma)^*)\otimes \sC(\sH)$,
where the dual vector space $(\ker \sigma)^*$ is considered to be a locally compact space,
$\sH$ is an $L^2$-space of euclidean dimension $\dim(V/\ker\sigma)/2$
and $\sC(\sH)$ denotes the compact operator algebra in a Hilbert space $\sH$. 
When $\sigma$ is non-degenerate,
this gives the uniqueness theorem of Stone-von Neumann on CCR
and in fact these are essentially equivalent. States of $C^*(V,\sigma)$ are
therefore identified with density operator-valued probability measures on $(\ker \sigma)^*$.

Relying on the method developed in \cite{gqfs}, 
we shall here describe free states in terms of density operators and
give an explicit formula for powers of free density operators. 

\subsection{}
In the following, we fix a Lebesgue measure $dx$ in $V$ once for all
and make $L^1(V)$ into a *-algebra, which is denoted by $L^1(V,\sigma)$ to indicate the dependence on
$\sigma$, in such a way that a continuous representation $\pi(e^{ix})$ of
the Weyl commutation relations gives rise to a *-representation of $L^1(V)$ by
\[
  \pi(f) = \int_V f(x) \pi(e^{ix})\, dx. 
\]
It is immediate to write down the explicit form: 
\[
  (fg)(x) = \int_V f(y) g(x-y) e^{i\sigma(x,y)/2}\,dy,
  \quad
f^*(x) = \overline{f(x)}
\]
for $f, g \in L^1(V)$.
Then $C^*(V,\sigma)$ is just the enveloping C*-algebra of $L^1(V,\sigma)$ and we also use the notation
$\int f(x) e^{ix}\, dx$ ($f \in L^1(V)$) to stand for an element in $C^*(V,\sigma)$ in view of the fact that
a *-representation $\pi$ of $C^*(V,\sigma)$ is in a one-to-one correspondece with a continuous representation
$\pi(e^{ix})$ of Weyl unitaries by the relation $\pi(f) = \int f(x) \pi(e^{ix})\, dx$. 
Likewise, a positive functional $\varphi$ of $C(V,\sigma)$ for which the characteristic function
$\varphi(e^{ix})$ is continuous in $x \in V$ is in a one-to-one correspondence with a positive functional
of $C^*(V,\sigma)$ by the relation
$\varphi(f) = \int f(x) \varphi(e^{ix})\, dx$ ($f \in L^1(V)$).
Particularly, a free state $\varphi_S$ of $C(V,\sigma)$ corresponds to a state $\varphi$
of $C^*(V,\sigma)$ described by $\varphi(f) = \int e^{-S(x,x)/2} f(x)\, dx$ ($f \in L^1(V)$).
In what follows, $\varphi$ is also denoted by
$\varphi_S$ and referred to as a free state of $C^*(V,\sigma)$. 

A functional $\tau$ is now introduced on a dense *-ideal $L(V,\sigma) = C_0(V) \cap L^1(V)$ of $L^1(V,\sigma)$ by
$\tau(f) = f(0)$ ($f \in L(V,\sigma)$), which is positive and tracial from 
\[
  \tau(f^*f) = \int_V |f(x)|^2\, dx = \tau(ff^*).
\]
The associated GNS representation $\pi$ of $L^1(V,\sigma)$ is then realized on $L^2(V)$ by
\[
  (\pi(f)\xi)(x) = \int_V f(y) \xi(x-y) e^{i\sigma(x,y)/2}\,dy
\]
and gives rise to a faithful imbedding of $C^*(V,\sigma)$ into $\sB(L^2(V))$.

\begin{Remark}
  {\small
  When $\sigma$ is non-degenerate, $C^*(V,\sigma)) \cong \sC(\sH)$ so that $L^2(V) \cong \sH\otimes \sH^*$
  and $\tau$ turns out to be proportional to the ordinary trace by a factor 
  $(2\pi)^{-\dim V/2} \text{tr}$ thanks to von Neumann's formula on a minimal projection in
  $C^*(V,\sigma)$ (\cite{vN}).}
\end{Remark}

For $S \in \sP(V,\sigma)$ with $S + \overline{S}$ positive definite, $\rho_S(x) = e^{-S(x,x)/2}$ belongs to
$L(V,\sigma)$ and satisfies $\varphi_S(f) = \tau(\rho_S f)$ for $f \in L^1(V)$.
In other words, $\rho_S$ is a density operator as an element of $C^*(V,\sigma)$ which represents $\varphi_S$
with respect to $\tau$.

By the positivity of $\varphi_S$, $\rho_S$ is a positive element of $C^*(V,\sigma)$ and 
  its positive power $\rho_S^r$ has a meaning as a positive element as well.
  We shall show that $\rho_S^r$ is proportional to $\rho_{S^{(r)}}$ with an explicit formula
  for the proportional constant $\tau(\rho_S^r)$.
  Notice here that
  $S^{(r)} + \overline{S^{(r)}}$ is equivalent to $S + \overline{S}$ and therefore $\rho_{S^{(r)}} \in L(V,\sigma)$.

  We recall the following standard fact in linear algebra.
  
  \begin{Lemma}\label{basis}
    We can find a basis $(h_i,p_j,q_j)$ of $V$ which is orthonormal with respect to $S + \overline{S}$ and
    satisfies
    \[
      \mathbf{S}h_i = \frac{1}{2}h_i,
      \quad
      \mathbf{S}p_j = \frac{1}{2}p_j + i\mu_jq_j,
      \quad
      \mathbf{S}q_j = \frac{1}{2}q_j - i\mu_jp_j. 
    \]
    Here $0 < \mu_j \leq 1/2$ and
    $\frac{1}{2} - \mu_j$'s are eigenvalues (including multiplicity) of $\mathbf{S}$
    in the range $[0,1/2)$. 
    Note that the alternating form satisfies $\sigma(q_j,p_k) = 2\mu_j\delta_{j,k}$ and $S$ is non-boundary
    if and only if $\mu_j < 1/2$ for every $j$. 
  \end{Lemma}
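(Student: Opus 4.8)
The plan is to understand Lemma~\ref{basis} as a statement about the real structure of the ratio operator $\mathbf{S}$ on the Hilbert space $V_S^\C$. Recall from \S1 that $V_S^\C$ carries an anti-unitary involution (complex conjugation), that $\mathbf{S} + \overline{\mathbf{S}} = 1$ so $\overline{\mathbf{S}} = 1 - \mathbf{S}$, and that $\mathbf{S}$ is self-adjoint and positive with spectrum in $[0,1]$. The key algebraic fact is that $\mathbf{S} - \tfrac{1}{2}$ is anti-real, i.e. $\overline{\mathbf{S} - \tfrac12} = -(\mathbf{S} - \tfrac12)$, so the self-adjoint operator $K = i(\mathbf{S} - \tfrac12)$ satisfies $\overline{K} = K$ (it is real) and $K^* = -K$ (it is anti-self-adjoint). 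Equivalently $iK = -(\mathbf{S}-\tfrac12)$ is a real, self-adjoint operator whose eigenvalues come from the spectrum of $\mathbf{S}$ shifted by $-\tfrac12$.

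First I would reduce to the finite-dimensional setting, which is exactly the standing hypothesis of \S5 ($\dim V < \infty$ with $S+\overline S$ positive definite), so $V_S^\C = V^\C$ and $\mathbf{S}$ is an honest Hermitian matrix. Writing $A = \mathbf{S} - \tfrac12$, the two conditions are $A^* = A$ (self-adjoint w.r.t.\ $(\ ,\ )_S$) and $\overline A = -A$ (anti-real). Then $iA$ is a \emph{real} self-adjoint operator: it is self-adjoint since $(iA)^* = -iA^* = -iA$\ldots wait, rather I would observe directly that $A$ is Hermitian and purely imaginary in the real structure, so $A$ is a real alternating (skew-symmetric) operator on the real space $V$ after identifying $V^\C$'s real part $V$. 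Concretely, on the real Euclidean space $(V, S+\overline S)$ the operator $-iA$ restricts to a real skew-symmetric endomorphism. The heart of the argument is then the classical real normal form for a skew-symmetric operator: its kernel and its $2$-dimensional rotation blocks. The kernel of $-iA$ is $\ker(\mathbf{S}-\tfrac12)$, which furnishes the orthonormal vectors $h_i$ with $\mathbf{S}h_i = \tfrac12 h_i$; each rotation block of $-iA$ with rotation speed $\mu_j>0$ furnishes an orthonormal real pair $p_j, q_j$.

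Next I would carry out the block computation explicitly. On a real $2$-plane spanned by an orthonormal pair $p_j,q_j$, a skew-symmetric operator acts as $p_j \mapsto -\mu_j q_j$, $q_j \mapsto \mu_j p_j$ (up to sign/labeling chosen to match the sign conventions in the statement). Translating back through $A = i(\text{skew part}) = \mathbf{S} - \tfrac12$, this yields precisely
\[
  \mathbf{S}p_j = \tfrac12 p_j + i\mu_j q_j,
  \qquad
  \mathbf{S}q_j = \tfrac12 q_j - i\mu_j p_j,
\]
after matching the orientation of the block to the sign in the displayed formulas. The eigenvalues of $\mathbf{S}$ on this block are $\tfrac12 \pm \mu_j$, so the constraint $0 < \mu_j \le \tfrac12$ is exactly the requirement $\operatorname{spec}(\mathbf{S}) \subset [0,1]$ restricted to the non-central part, with $\tfrac12 - \mu_j$ ranging over the eigenvalues in $[0,\tfrac12)$; counting eigenvalues with multiplicity is immediate from the block decomposition. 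The alternating form is then read off from $\sigma_S(\xi,\eta) = -i(\xi|(\mathbf{S}-\overline{\mathbf{S}})\eta)_S = -i(\xi|(2\mathbf{S}-1)\eta)_S = -2i(\xi|A\eta)_S$; substituting the block action gives $\sigma(q_j,p_k) = 2\mu_j\delta_{j,k}$ and $\sigma$ vanishing on the $h_i$, while non-boundariness ($\ker\mathbf{S}=\{0\}$, equivalently $\ker(1-\mathbf{S})=\{0\}$) translates to the exclusion of the eigenvalue pair $\mu_j = \tfrac12$.

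The main obstacle is purely bookkeeping rather than conceptual: I must be careful that the orthonormality is with respect to $(\ ,\ )_S = S + \overline S$ and simultaneously compatible with the real structure, i.e.\ that the real pairs $p_j,q_j$ can be chosen \emph{real} (fixed by complex conjugation) and orthonormal. This follows because the skew-symmetric operator $-iA$ is genuinely real, so its real normal form produces real orthonormal vectors; the only delicate point is fixing signs and the labeling $p_j \leftrightarrow q_j$ within each block so that the $\pm i\mu_j$ signs in the two displayed eigenrelations come out with the stated orientation, which amounts to orienting each rotation plane consistently. This is a standard spectral-theory argument (the real canonical form of a skew-symmetric operator, equivalently the structure theorem for a complex structure commuting with a Euclidean metric), which is why the statement is flagged as a ``standard fact in linear algebra.''
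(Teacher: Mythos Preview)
Your argument is correct and is precisely the ``standard fact in linear algebra'' the paper invokes without proof: the real canonical form of a skew-symmetric endomorphism on a Euclidean space, applied here to the real skew operator arising from $\mathbf{S}-\tfrac12$ on $(V,S+\overline S)$. Since the paper gives no proof of its own, there is nothing further to compare; your bookkeeping on signs, the eigenvalue range $0<\mu_j\le\tfrac12$, the computation of $\sigma(q_j,p_k)$, and the non-boundary characterization are all in order (the momentary confusion about which of $\pm iA$ is real is self-corrected and harmless).
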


  Since $\varphi_S$ is factored as a product state according to the spectral decomposiotn of $\mathbf{S}$
  in Lemma~\ref{basis}, the problem is reduced to the case $V = \R$ ($\sigma = 0$)
  or $V = \R^2$ ($\sigma$ being non-degenerate).
  %
  %
  %
  %
  %
  %
%
%
%
%
%

\subsection{}
We now focus on a single component case $V = \R p + \R q$ with $\sigma(q,p) = 2\mu$ ($0 < \mu \leq 1/2$).
Then, in terms of the linear coordinates $xp + yq$ ($x,y \in \R$),
$\rho_S(x,y) = e^{-(x^2+y^2)/4}$ and the Liouville measure is given by $2\mu dxdy$. 

To get an explicit formula for $\rho_S^r$,
we assume $\mu < 1/2$ for the moment and start with the Gaussian integral formula (cf.~\cite{gqfs} \S 4.2)
\[
e^{-\mu(x^2+y^2)/2a}* e^{-\mu(s^2+y^2)/2b} 
= \frac{4\pi mab}{\mu(a+b)} e^{-\mu(s^2+y^2)/2a*b}, 
\quad 
a*b = \frac{a+b}{1+ab}
\]
in $L^1(V,\sigma)$ (the convolution product being taken with respect to the measure $2mdxdy$), 
which suggests putting $a = \tanh\theta$ ($\theta>0$) with $\theta$ an additive parameter. 

To apply this formula to $\rho_S$, we choose $a = 2\mu$ and set 
$2\mu = \tanh\theta$ ($\theta > 0$) with the restriction $\mu < 1/2$ maintained.  
The positive element $\rho_S^r$ is then realized by
a Gaussian function of the variance parameter $\tanh(r\theta)$:
\[
\rho_S^r(x,y) = w(r)
\exp\left(
- \frac{\tanh\theta}{4\tanh(r\theta)} (x^2+y^2) 
\right) 
\]
with $w(r) = \tau(\rho_S^r)$ a positive constant depending continuously on $r>0$. 

To determine $w(r)$, we look at the relation
$\rho_S^r\rho_S^{r'} = \rho_S^{r+r'}$, i.e., 
\[
w(r+r') = \frac{4\pi m}{\mu} 
\frac{\tanh(r\theta) \tanh(r'\theta)}
{\tanh(r\theta) + \tanh(r'\theta)}
w(r)w(r').
\]
By the identity 
\[
\frac{1}{\sinh(\alpha + \beta)} 
= \frac{\tanh\alpha \tanh\beta}{\tanh\alpha + \tanh\beta} 
\frac{1}{\sinh\alpha} \frac{1}{\sinh\beta},
\]
$(4\pi m/\mu) \sinh(r\theta) w(r)$ is multiplicative in $r$ and we conclude that 
\[
w(r) = \left(\frac{4\pi m}{\mu}\right)^{r-1} \frac{(\sinh \theta)^r}{\sinh(r\theta)}
\]
in view of $w(1) = 1$.
Remark here that the case $\mu = 1/2$ is covered by 
these formulas if we take the limit $\theta \to \infty$. 
In fact, with respect to the Liouville measure ($m = \mu$), 
\[
w(r) = \lim_{\theta \to \infty} 
\frac{(4\pi \sinh \theta)^r}{4\pi \sinh(r\theta)}
= (2\pi)^{r-1}
\]
matches with the correct formula 
$\rho_S^r = (2\pi)^{(r-1)\dim V/2} \rho_S$ for an extremal $S$.  
  
All these local results are now written in a global form. 
Introduce an operator $H$ on $V^\C$ so that $\tanh H$ is represented by the matrix 
$\begin{pmatrix}
0 & 2i\mu_j\\
-2i\mu_j & 0  
\end{pmatrix}$ on the subspace $\C p_j + \C q_j$, i.e., 
\[
\mathbf{S} - \overline{\mathbf{S}} = \tanh H.
\]
Here we allow $H$ to take eigenvalues $\pm\infty$. 
Since $\tanh$ is an odd function and the left hand side 
is purely imaginary, we see $\overline{H} = -H$.  
From the relations
\[
\mathbf{S} = \frac{1 + \tanh H}{2} = \frac{e^H}{e^H + e^{-H}}, 
\quad 
\overline{\mathbf{S}} = \frac{1 + \tanh(-H)}{2} 
= \frac{e^{-H}}{e^H + e^{-H}},
\]
we obtain the expression 
$2H = \log({\mathbf{S}}{\overline{\mathbf{S}}}^{-1})$. 
In terms of $H$, we have 
\[
S^{(r)}(x,y) = 
\frac{1}{2}(x, \frac{\tanh H}{\tanh rH} (1 + \tanh rH)y)_S. 
\]
The real quadratic form $S^{(r)}(x,x) - S(x,x)$ is then 
represented by the real operator 
\[
\frac{1}{2} \frac{\tanh H - \tanh rH}{\tanh rH} 
= \frac{e^{(1-r)H} - e^{(r-1)H}}
{(e^H + e^{-H}) (e^{rH} - e^{-rH})}.
\]
Replacing $e^H$ with ${\mathbf{S}}^{1/2}
{\overline{\mathbf{S}}}^{-1/2}$, this is further reduced to  
\[
\frac{{\mathbf{S}} {\overline{\mathbf{S}}}^r 
- {\mathbf{S}}^r\overline{\mathbf{S}}}
{(\mathbf{S} + \overline{\mathbf{S}}) ({\mathbf{S}}^r - 
{\overline{\mathbf{S}}}^r)} 
= 
\frac{{\mathbf{S}} {\overline{\mathbf{S}}}^r 
- {\mathbf{S}}^r\overline{\mathbf{S}}}
{{\mathbf{S}}^r - 
{\overline{\mathbf{S}}}^r}. 
\]
Thus
\[
S^{(r)}(x,y) = S(x,y) + 
\frac{S{\overline S}^r - S^r {\overline S}}
{S^r - {\overline S}^r}
(x,y)
\]
for $x, y \in V^\C$ and we are reduced to the scaling flow on $S$.


In a similar way, we have a coordinate-free expression for $\tau(\rho_S^r)$,
this time depending on the choice of a Lebesgue measure in $V$.
To be explicit, we first work with the Liouville measure ($m = \mu$) to get 
\begin{align*}
\tau(\rho_S^r) &= 
\det\left(
\frac{|4\pi \sinh H|^r}{4\pi |\sinh(rH)|}
\right)^{1/2}\\
&= (2\pi)^{(r-1)\dim V/2} 
\det\left(
\frac{(e^{2H} + e^{-2H} - 2)^r}{e^{2rH} + e^{-2rH} - 2}
\right)^{1/4}\\
&= (2\pi)^{(r-1)\dim V/2} 
\det\left(
\frac{|S-\overline{S}|^r}{|S^r - {\overline S}^r|}
\right)^{1/2}. 
\end{align*}
Note that the operator 
$|S - \overline S|^r/|S^r - {\overline S}^r|$ on $V^\C$ 
is well-defined as a functional calculus of homogeneous degree $0$. 

\begin{Theorem} Let $(V,\sigma)$ be a finite-dimensional 
symplectic vector space and $S \in \sP(V,\sigma)$.  
Then, with the Liouville measure as a reference,
$\rho_S^r$ ($r>0$) belongs to $L(V,\sigma) = C_0(V) \cap L^1(V)$ and is given by
\[
  \rho_S^r = w(r) \rho_{S^{(r)}}
    \quad
    \text{with}
    \quad 
w(r) = 
(2\pi)^{(r-1)\dim V/2} 
\det\left(
\frac{|S-\overline{S}|^r}{|S^r - {\overline S}^r|}
\right)^{1/2}.  
\]
\end{Theorem}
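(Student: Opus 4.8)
The plan is to reduce the global statement to a product of one-degree-of-freedom computations, carry out the scalar bookkeeping there, and then repackage the result invariantly. Since $\sigma$ is non-degenerate, $1/2$ is not an eigenvalue of $\mathbf{S}$, so Lemma~\ref{basis} furnishes an $(S+\overline{S})$-orthonormal symplectic basis splitting $V = \bigoplus_j V_j$ into two-dimensional blocks $V_j = \R p_j + \R q_j$ on each of which $\mathbf{S}$ acts by the stated matrix with $0 < \mu_j \leq 1/2$. Correspondingly $S = \bigoplus_j S_j$, the Liouville measure and the trace $\tau$ factor as tensor products, and $\rho_S(x) = e^{-S(x,x)/2} = \prod_j e^{-S_j(x_j,x_j)/2}$ exhibits $\rho_S = \bigotimes_j \rho_{S_j}$ as a product state of positive elements. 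Consequently $\rho_S^r = \bigotimes_j \rho_{S_j}^r$; by item (iii) of the simple facts on scaling flows $S^{(r)} = \bigoplus_j S_j^{(r)}$, and both the determinant and the power of $2\pi$ in $w(r)$ are multiplicative over the blocks. Thus it suffices to establish $\rho_{S_j}^r = w_j(r)\rho_{S_j^{(r)}}$ on each block.

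First I would treat a single block with $\mu < 1/2$, which is the computation carried out above. Setting $2\mu = \tanh\theta$ ($\theta>0$), the Gaussian convolution identity in $L^1(V,\sigma)$ shows that the family $\exp\!\left(-\tfrac{\tanh\theta}{4\tanh(r\theta)}(x^2+y^2)\right)$, rescaled by a scalar, is a continuous one-parameter convolution semigroup passing through $\rho_S$ at $r=1$ and consisting of positive elements, each proportional to the free density operator $\rho_{S^{(r)}}$ of a covariance form $S^{(r)} \in \sP(V,\sigma)$. The crux is to identify this explicit semigroup with the functional-calculus power $\rho_S^r$ defined inside $C^*(V,\sigma)$: both are continuous semigroups of positive elements agreeing at $r=1$, and such a semigroup is determined by its value at $r=1$ by uniqueness of fractional powers of a positive element, so they coincide. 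The constant $w(r)=\tau(\rho_S^r)$ is then pinned down from $\rho_S^r\rho_S^{r'}=\rho_S^{r+r'}$ together with the convolution constant: the hyperbolic identity for $1/\sinh(\alpha+\beta)$ makes $(4\pi m/\mu)\sinh(r\theta)\,w(r)$ multiplicative in $r$, and $w(1)=1$ forces $w(r) = (4\pi m/\mu)^{r-1}(\sinh\theta)^r/\sinh(r\theta)$.

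Next I would handle the extremal block $\mu = 1/2$ by the limit $\theta \to \infty$, in which $S^{(r)} = S$ is a fixed point of the flow and the formula collapses to $\rho_S^r = (2\pi)^{r-1}\rho_S$ per two-dimensional block, matching the von~Neumann minimal-projection normalization; one must check that the $\theta\to\infty$ limits of the Gaussian and of $w(r)$ exist and reproduce this value. Finally, globalization assembles the block data through the operator $H$ with $\mathbf{S} - \overline{\mathbf{S}} = \tanh H$ (so $2H = \log(\mathbf{S}\overline{\mathbf{S}}^{-1})$): the variance operator $\tfrac{\tanh H}{\tanh rH}$ produces exactly the scaling flow $S^{(r)}(x,y) = S(x,y) + \frac{S\overline{S}^r - S^r\overline{S}}{S^r - \overline{S}^r}(x,y)$, while the product of the block scalars becomes $\det(|S-\overline{S}|^r/|S^r-\overline{S}^r|)^{1/2}$ via the identifications of $|4\pi\sinh H|$ and $\sinh(rH)$ with the homogeneous functional calculi $|S-\overline{S}|$ and $|S^r-\overline{S}^r|$. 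Membership $\rho_S^r = w(r)\rho_{S^{(r)}} \in L(V,\sigma)$ is then immediate, since $S^{(r)} + \overline{S^{(r)}} \sim S+\overline{S}$ is positive definite, making $\rho_{S^{(r)}}$ a genuine Gaussian in $C_0(V)\cap L^1(V)$. The main obstacle I anticipate is precisely the identification step in the second paragraph — rigorously matching the abstractly defined $C^*$-power $\rho_S^r$ with the explicit Gaussian semigroup, and ensuring that positivity and continuity hold uniformly enough to pass to the $\mu=1/2$ boundary — since everything downstream is routine hyperbolic-function bookkeeping once that link is secured.
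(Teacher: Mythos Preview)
Your proposal is correct and follows essentially the same route as the paper: block decomposition via Lemma~\ref{basis}, the single-block Gaussian convolution with the $\tanh\theta$ parametrization, determination of $w(r)$ from multiplicativity and $w(1)=1$, the $\theta\to\infty$ limit for the extremal block, and globalization through the operator $H$. You are in fact more explicit than the paper on the one point it leaves tacit, namely the identification of the abstract $C^*$-power $\rho_S^r$ with the explicit Gaussian semigroup via uniqueness of the positive one-parameter semigroup through $\rho_S$.
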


Next we work with the Euclidean measure associated to the inner product $S + \overline{S}$, 
which is non-degenerate due to the non-degeneracy of $\sigma$.
In local coordinates, this corresponds to the choice $m = 1/2$ and we have 
\begin{align*}
  \tau(\rho_S^r)
  &= \prod_j \left(\frac{2\pi}{\mu_j}\right)^{r-1} \frac{(\sinh \theta_j)^r}{\sinh(r\theta_j)}\\
  &= (2\pi)^{(r-1) \dim V /2}
    \det\left(\frac{|S - \overline{S}|}{S + \overline{S}}\right)^{(1-r)/2} 
    \det\left(
\frac{|S-\overline{S}|^r}{|S^r - {\overline S}^r|}
    \right)^{1/2}\\
  &= (2\pi)^{(r-1) \dim V /2}
    \det\left(
\frac{(S+\overline{S})^{r-1} |S-\overline{S}|}{|S^r - {\overline S}^r|}
    \right)^{1/2}.  
\end{align*}

Keeping the non-degeneracy of $S + \overline{S}$,
we finally relax the alternating form $\sigma$ to be degenerate. 
In local coordinates, the degenerate part of $\sigma$ is reduced to
one-dimensional Gaussian measures.

Let $V = \R h$ with $S(h,h) = s > 0$ and consider the convolution algebra
with repsect to the basis $\{h\}$:
In terms of the multiplicative group elements $e^{ixh}$ ($x \in \R$), 
the density operator is expressed by $\int_\R e^{-sx^2/2} e^{ixh}\, dx$ in the group algebra,
which is Fourier transformed
to the multiplication function $\sqrt{\frac{2\pi}{s}} e^{-\xi^2/2s}$
with its $r$-th power $\left( \frac{2\pi}{s} \right)^{r/2} e^{-r\xi^2/2s}$ 
transformed back to the convolution algebra element
\[
  (2\pi)^{(r-1)/2} r^{-1/2} s^{(1-r)/2} \int_\R e^{-sx^2/2r}\, e^{ixh}\, dx. 
\]
Remark here that $h$ is a unit vector with respect to $S + \overline{S}$ if and only if $s = 1/2$.

We now compare this with the above formula for $\tau(\rho_S^r)$, where 
the seed function
$\frac{(s+t)^{r-1}|s-t|}{|s^r - t^r|}$ has a limit
\[
\lim_{s,t \to 1/2} \frac{(s+t)^{r-1}|s-t|}{|s^r - t^r|} = \frac{2^{r-1}}{r}. 
\]
Thus the trace formula for $\rho_S^r$ remains valid for degenerate $\sigma$ as well, in so far as 
$S + \overline{S}$ is non-degenerate. 

\begin{Theorem} Let $(V,\sigma)$ be a finite-dimensional 
alternating vector space and $S \in \sP(V,\sigma)$ 
be generic in the sense that $S + \overline{S}$ is non-degenerate.
Then, with respect to the Euclidean measure associated to $S + \overline{S}$,
$\rho_S^r$ ($r>0$) belongs to $L(V,\sigma) = C_0(V) \cap L^1(V)$ and is given by
$\rho_S^r = \tau(\rho_S^r) \rho_{S^{(r)}}$ with
\[
\tau(\rho_S^r) = 
(2\pi)^{(r-1)\dim V/2} 
\det\left(
\frac{(S+\overline{S})^{r-1} |S-\overline{S}|}{|S^r - {\overline S}^r|}
\right)^{1/2}.  
\]
\end{Theorem}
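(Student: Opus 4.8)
The plan is to reduce the global identity to a product of single-block identities via the orthogonal decomposition of Lemma~\ref{basis}, verify it on each block (treating the $\sigma$-degenerate and $\sigma$-nondegenerate blocks separately), and reconcile the two block types through a continuity argument across the diagonal $s=t=\tfrac12$.

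First I would record the multiplicative structure. By Lemma~\ref{basis} the space $V$ splits $(S+\overline S)$-orthogonally into one-dimensional blocks $\R h_i$ (on which $\sigma$ vanishes) and two-dimensional blocks $\R p_j + \R q_j$ (on which $\sigma$ is nondegenerate, $\sigma(q_j,p_j)=2\mu_j$), with $S=\bigoplus_k S_k$. Correspondingly $C^*(V,\sigma)$ is the tensor product of the block algebras, $\varphi_S$ is the product of the block free states, and the Gaussian density $\rho_S(x)=e^{-S(x,x)/2}$ factors as $\prod_k \rho_{S_k}(x_k)$; hence $\rho_S=\bigotimes_k \rho_{S_k}$ and, taking positive powers inside each tensor factor, $\rho_S^r=\bigotimes_k \rho_{S_k}^r$. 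Since the chosen basis is $(S+\overline S)$-orthonormal, the Euclidean measure is the product of the coordinate Lebesgue measures and $\tau$ factors as the product of the block traces; moreover $S^{(r)}=\bigoplus_k S_k^{(r)}$ by the direct-sum property of the scaling flow. Thus both sides of the claimed identity are multiplicative over blocks, and it suffices to establish $\rho_{S_k}^r=\tau(\rho_{S_k}^r)\rho_{S_k^{(r)}}$ together with the stated form of $\tau(\rho_{S_k}^r)$ on each single block.

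Next I would dispatch the two block types, both of which have essentially been computed. On a two-dimensional block the Gaussian convolution identity already gives $\rho_{S_k}^r=w(r)\rho_{S_k^{(r)}}$ with $w(r)=(2\pi/\mu_j)^{r-1}(\sinh\theta_j)^r/\sinh(r\theta_j)$ for the Euclidean normalization $m=1/2$ and $2\mu_j=\tanh\theta_j$; I would then match this to the determinant formula by noting that $\mathbf S$ has eigenvalues $\tfrac12\pm\mu_j$ on the complexified block, so the homogeneous-degree-zero seed $\tfrac{(s+t)^{r-1}|s-t|}{|s^r-t^r|}$ takes the value $\tfrac{2\mu_j}{(\tfrac12+\mu_j)^r-(\tfrac12-\mu_j)^r}$ on both eigenlines; substituting $\tfrac12\pm\mu_j=\tfrac12 e^{\pm\theta_j}/\cosh\theta_j$ turns $(2\pi)^{r-1}\det(\cdots)^{1/2}$ into exactly $w(r)$ (the extremal case $\mu_j=1/2$ being the limit $\theta_j\to\infty$, where both expressions reduce to $(2\pi)^{r-1}$). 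On a one-dimensional degenerate block the Weyl algebra is commutative and the Fourier computation gives $\rho_{S_k}^r=(2\pi)^{(r-1)/2}r^{-1/2}s^{(1-r)/2}\rho_{S_k^{(r)}}$; for an $(S+\overline S)$-unit vector $s=S(h,h)=1/2$, so the constant is $(4\pi)^{(r-1)/2}r^{-1/2}$, which I would recognize as $(2\pi)^{(r-1)/2}$ times $(2^{r-1}/r)^{1/2}$, i.e.\ the determinant formula evaluated through the removable singularity $\lim_{s,t\to1/2}\tfrac{(s+t)^{r-1}|s-t|}{|s^r-t^r|}=2^{r-1}/r$.

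Finally I would assemble: the determinant is multiplicative over the orthogonal block decomposition and $\sum_k\dim V_k=\dim V$, so the product of the block constants is precisely the stated $(2\pi)^{(r-1)\dim V/2}\det(\cdots)^{1/2}$; membership $\rho_S^r\in L(V,\sigma)$ follows because $S^{(r)}+\overline{S^{(r)}}$ is equivalent to the positive-definite $S+\overline S$, whence $\rho_{S^{(r)}}\in L(V,\sigma)$ and $\rho_S^r$ is a scalar multiple of it. I expect the main obstacle to be the degenerate directions: one must compute powers in the commutative (Fourier) picture separately and then confirm that the answer is the analytic continuation of the nondegenerate formula across the diagonal $s=t=\tfrac12$, so that a single determinant expression covers both block types while the bookkeeping of the Euclidean normalization stays consistent between them.
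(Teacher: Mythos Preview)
Your proposal is correct and follows essentially the same route as the paper: reduce via Lemma~\ref{basis} to a product over one- and two-dimensional blocks, compute the two-dimensional constant from the Gaussian convolution formula (with the Euclidean normalization $m=1/2$), compute the one-dimensional constant by Fourier transform, and then match both to the single determinantal expression through the removable singularity $\lim_{s,t\to1/2}\frac{(s+t)^{r-1}|s-t|}{|s^r-t^r|}=2^{r-1}/r$. The only cosmetic difference is that the paper first records the symplectic case with the Liouville measure as a separate theorem and then rescales to the Euclidean measure, whereas you work in the Euclidean normalization from the outset.
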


\end{document}